\tikzstyle{state}+=[minimum size = 8mm, inner sep=0,outer sep=1]
\tikzset{->,>=stealth'}
\definecolor{wwhite}{gray}{1}
\newtheorem{theorem}{Theorem}
\newtheorem{proposition}{Proposition}
\newtheorem{lemma}{Lemma}
\theoremstyle{example}
\newtheorem{example}{Example}
\theoremstyle{definition}
\newtheorem{definition}{Definition}
\theoremstyle{remark}
\newtheorem{remark}{Remark}
\begin{document}
\title{Runtime Verification for LTL in Stochastic Systems}
%
%
\author{Javier Esparza\orcidlink{0000-0001-9862-4919} \and
Vincent Fischer \orcidlink{0009-0009-3071-0736}}
%
%
\affil{Technical University of Munich}
\date{}
\maketitle              
\begin{abstract}
Runtime verification encompasses several lightweight techniques for checking whether a system's current execution satisfies a given specification. We focus on runtime verification for Linear Temporal Logic (LTL). Previous work describes monitors which produce, at every time step one of three outputs - true, false, or inconclusive - depending on whether the observed execution prefix definitively determines satisfaction of the formula. However, for many LTL formulas, such as liveness properties, satisfaction cannot be concluded from any finite prefix. For these properties traditional monitors will always output inconclusive. In this work, we propose a novel monitoring approach that replaces hard verdicts with probabilistic predictions and an associated confidence score. Our method guarantees eventual correctness of the prediction and ensures that confidence increases without bound from that point on. 

\end{abstract}

\section{Introduction}

Runtime verification is a lightweight verification technique complementing model checking and testing. It focuses on whether a run of the system under scrutiny satisfies or violates a given property \cite{LeuckerS09,FalconeHR13,BartocciFFR18}. In the online setting this is achieved by \emph{monitors} that watch the finite prefixes of an infinite run and emits for each prefix a verdict of the form true, false, or ``don't know yet''. Intuitively, the monitor has no knowledge of the system, and so its verdict at a given time can only depend on the prefix of the run executed until that time. 

In this paper we restrict ourselves to runtime verification of properties specified in Linear Temporal Logic (LTL).
This problem was studied by Bauer \etal  in \cite{BauerLS06,BauerLS07,BauerLS10,BauerLS11}
(see also work by Barringer \etal \cite{BarringerGHS04,BarringerGHS04b}). Bauer \etal show how to construct, given an LTL formula $\varphi$, a monitor that for any finite trace $\path$ emits the verdict
\textsf{true} if $\path$ is a \emph{good prefix} \cite{KupfermanV01}, meaning that every run extending $\path$ satisfies $\varphi$;
\textsf{false}, if $\path$ is a \emph{bad prefix}, meaning that every run extending $\path$ violates $\varphi$; and
\textsf{inconclusive}, otherwise. A property is \emph{monitorizable} if for every finite trace $\path$ there exists at least one finite trace $v$ such that $\path v$ is a good or bad prefix. Bauer \etal  show that the set of monitorizable properties properly includes all safety and co-safety properties. 

There exist many LTL formulas for which the monitor answers \textsf{inconclusive} for any $\path$ (\cite{BauerLS11} reports this to be the case for 43 out of a suite of 97 formulas selected from the software specification pattern collection \cite{DwyerAC99}). Examples include $\G \F p$, which expresses that $p$ holds infinitely often during the execution, or $\G (r \to \F a)$, stating that every request is eventually followed by an answer. On the one hand, this is clearly unavoidable, since liveness properties are informally defined as those for which no finite prefix reveals whether the property holds. However, one of the reasons for the introduction of LTL is precisely to have a unique specification formalism for both safety and liveness properties, which makes the situation unsatisfactory. 

We show that when the system under scrutiny is an unknown finite-state Markov chain it is possible to design monitors that always outputs a boolean \emph{verdict} (\textsf{true} or \textsf{false}), together with a quantitative \emph{confidence level} in it.  

A natural first idea is to relate the confidence to the probability that a run extending $\path$ satisfies the property\footnote{For example, Bauer et al. mention ``monitors yielding a probability with which a given correctness property is satisfied'' (\cite{BauerLS11}, page 294).}. However, this probability is only defined under the assumption that the Markov chain has been sampled from some set according to some  probability distribution, which is not adequate in applications where systems are not sampled but designed. For this reason, we follow a different approach: our monitor delivers a boolean verdict derived from the chain with the \emph{maximum likelihood} of generating the current trace, and a confidence level derived using a \emph{likelihood ratio} estimate. Verdict and confidence level can be computed by a monitor that only knows a) the current finite trace, meaning the sequence of states of the chain visited so far by the sampled execution, and b) a lower bound on the minimal probability of the transitions of the chain. In particular, the size of the chain is unknown. In the rest of the section we provide some  more details.

\paragraph{Our setting.} We assume that the Markov chain $\fchain$ under scrutiny belongs to the set of all finite-state Markov chains with states drawn from given countable set $\stuniv$, and where all transitions have probability at least $\pmin \in (0,1]$. 
Further, we assume that the property of interest is given as an LTL formula $\varphi$ over a finite set of atomic propositions $\atprop$. We identify each atomic proposition $P \in \atprop$ with a set of states of $\stuniv$---intuitively, the set of states satisfying the proposition. So we assume $P \subseteq \stuniv$ for every atomic proposition $P$. 

Using well-known theory  we can construct a \emph{deterministic} Rabin automaton $\fdra$
recognizing the language $L(\varphi) \subseteq \stuniv^\omega$ of infinite traces that satisfy $\varphi$, see e.g. \cite{BaierK2008}.  Let $\Q$ be the set of states of $\fdra$. Our task is to design a monitor that observes a finite trace $\path \in (\Q \times \stuniv)^*$ generated by the product Markov chain $\chain :=\fdra \times \fchain$, and emits a \emph{verdict} (\textsf{true} or \textsf{false}) and a quantitative \textit{confidence} in the verdict, expressed as a nonnegative real number\footnote{Observe that, since $\fdra$ is a deterministic automaton, $\chain$ is well defined: we have $(q, s) \xrightarrow{p} (q', s')$ if{}f $s \xrightarrow{p} s'$ and $q \xrightarrow{s} q'$ are transitions of $\fchain$ and $\fdra$.}.

\paragraph{Verdict.} Our approach is based on the well-known maximum likelihood principle. Loosely speaking, the principle states that, when betting on which chain has generated the observed trace $\path$, one should bet on a chain with maximal probability of generating $\path$ (more precisely, on one of the chains for which the probability of the runs extending $\path$ is maximal).  We prove the following simple but powerful zero-one law, which allows our monitor to choose its qualitative verdict: 
\begin{quote}
For every finite trace $\path \in (\Q\times \stuniv)^*$, there exists a unique product Markov chain $\mon{M}_\path$ with maximum likelihood of producing $\path$ (up to ``irrelevant'' states and transitions not reachable from the initial state of $\path$). Moreover, the probability that a run of $\mon{M}_\path$ extending $\path$ satisfies $\varphi$ is either 0 or 1. 
\end{quote}
The chain $\mon{M}_\path$ is just the one containing the states and transitions of $\path$, and can be easily computed on the fly. Our monitor constructs $\mon{M}_\path$, determines if the probability is 0 or 1, and outputs \textsf{false} or \textsf{true} accordingly. 

\paragraph{Confidence.} The maximum likelihood principle does not help
to derive a confidence level: intuitively, it determines on which
chain to bet, but not with which odds. For this, we use another well
established statistical notion: the \emph{likelihood ratio} between
two different statistical models (see
e.g. \cite{lehmann2005testing}). In our setting, this is the ratio
between the likelihood of $\mon{M}_\path$, which is maximal, and the
supremum of the likelihoods of all chains that disagree with
$\mon{M}_\path$ on the satisfaction of $\varphi$ (and which hence do not
have maximal likelihood). The ratio is akin to the odds of the
verdict being correct.

Our monitor uses $\path$ and $\pmin$ to compute a lower bound on the likelihood ration, and outputs it as confidence measure.  We show that the confidence converges a.s. towards $\infty$ when $\path$ grows. In other words, the monitor becomes increasingly confident in its verdict over time.

\medskip\noindent\textbf{Related work.} Runtime verification of LTL properties has been extensively studied in the non-stochastic setting, both for boolean properties where a run satisfies a property or not---see e.g. the surveys \cite{LeuckerS09,FalconeHR13,BartocciFFR18}---and for quantitative properties \cite{HenzingerS21,HenzingerMS22}. We focus on the stochastic setting.

Our work on runtime enforcement of LTL properties \cite{EKKW21,EsparzaG23} (which uses ideas from \cite{DacaHKP17}) is closely related to this paper. The goal of \cite{EKKW21,EsparzaG23} is, given a Markov chain $M$ and a property $\varphi$, design monitors for restarting $M$ that fulfill the following specification: if the runs of $M$ satisfying $\varphi$ have positive probability, then with probability 1 the number of restarts is finite, and the infinite run executed after the last restart satisfies $\varphi$. However, the restarting monitor does not provide any quantitative measure of the likelihood that the current trace extends to an infinite run satisfying $\varphi$.  

In \cite{GondiPS09}, Gondi \etal study runtime monitoring of $\omega$-regular properties of stochastic systems. They consider monitors that only output a boolean verdict, but with a guaranteed probability of answering {\true} for runs satisfying the property. We follow a different approach: our monitors output a confidence in their verdict for the concrete finite trace  that has been observed so far.

In \cite{StollerBSGHSZ11,HuangSCDGSSZ12} Stoller \etal also study runtime verification of stochastic systems. They interpret temporal formulas on finite traces, and study the problem of designing monitors that can only observe part of the trace. This is different from our approach, where we are interested in liveness properties of infinite runs. 

Our problem is also related to statistical model checking---see e.g. \cite{LegayLTYSG19} for a recent survey. The focus lies in estimating the probability of the runs satisfying a given property, where we study whether a finite trace will extend to a run satisfying the property.

\section{Preliminaries and setting of the paper}

\paragraph{Directed graphs.} A directed graph is a pair $G=(V, E)$, where $V$ is the set of vertices and $E \subseteq V\times V$ is the set of edges. A path (infinite path) of $G$ is a finite (infinite) sequence $v_0\, v_1 \ldots$ of vertices such that $(v_i, v_{i+1}) \in E$ for every $i=0,1 \ldots$.  
A  strongly connected component (SCC) of $G$ is a largest set $V'$ of vertices satisfying that for every two vertices $v, v' \in V'$ there is a path in $G$ leading from $v$ to $v'$. A bottom SCC (BSCC) of $G$  is an SCC $V'$ such that $v \in V'$ and $(v, v') \in E$ implies $v' \in V'$. 

\paragraph{Markov chains.} We fix a countable set $\stuniv$, called the \emph{state universe}. A \emph{Markov chain} is a triple $\Mc = (\St, \Pm, \init)$, where
\begin{itemize}
\item $\St \subseteq \stuniv$ is a set of \emph{states},
\item $\Pm \colon \St \times \St \to [0,1]$ is the \emph{probability matrix}, satisfying $\sum_{s'\in \St} \Pm(s,s') = 1$ for every $s\in \St$, and
\item $\init$ is the \emph{initial probability distribution} over $\St$. 
\end{itemize}
A pair $(s, s') \in \St \times \St$ of states is a \emph{transition} of $\Mc$ if $\Pm(s, s') > 0$.  
The \emph{graph of $\Mc$} is the directed graph $(V, E)$ where $V=\St$ and $E=\{ (s, s') \colon \Pm(s,s') > 0\}$. 
A \emph{run} of $\Mc$ is an infinite path $\rho=s_0 \, s_1 \cdots$ of (the graph of) $\Mc$; we let $\run[i]$ denote the state $s_i$.
Each path $\path$ of $\Mc$ determines the set of runs $\cone(\path)$ consisting of all runs that start with $\path$.
We assign to $\Mc$ the probability space $
(\runs,\mathcal F,\pr)$, where $\runs$ is the set of all runs of $\Mc$, $\mathcal F$ is the $\sigma$-algebra generated by all $\mathsf{Cone}(\path)$, and $\pr$ is the unique probability measure such that
$\pr[\mathsf{Cone}(s_0s_1\cdots s_k)] =
\mu(s_0)\cdot\prod_{i=1}^{k} \Pm(s_{i-1},s_i)$, with $\pr[\mathsf{Cone}(s_0)] =
\mu(s_0)$ for $k=0$. The state $s_k$ is \emph{reachable} from $s_0$ if $\pr[\mathsf{Cone}(s_0s_1\cdots s_k)] > 0$ or, equivalently, 
if $(s_i, s_{i+1})$ is a transition for every $0 \leq i \leq k_1$.

\paragraph{Linear Temporal Logic.} Formulas of Linear Temporal Logic (LTL)
over a set $\atprop$ of atomic propositions are expressions over the following syntax:
\begin{align*}
\varphi \Coloneqq \; & P \mid \neg \varphi \mid \varphi \wedge \varphi \mid \varphi\vee\varphi 
                     \mid \X\varphi \mid \varphi\U\varphi 
\end{align*}
\noindent where $P \in \atprop$ and $\X$, $\U$
are the next and strong until operators, respectively. We assume that
each atomic proposition is a subset of the state universe
$\stuniv$. Using this, we interpret formulas of LTL on \emph{infinite
  traces}, defined as infinite words over $\stuniv$, as follows. Given
an infinite trace $\path = s_0 s_1 s_2 \cdots \in \stuniv^\omega$, we let
$\path^{\ge i} := s_i s_{i+1}s_{i+2} \cdots$ denote its $i$-th suffix. The satisfaction relation $\path \models \varphi$ is inductively defined as the smallest relation satisfying
{\arraycolsep=1.8pt%
\[\begin{array}[t]{lclclcl}
\path \models P & \mbox{ if{}f }    & s_0 \in P     \\
\path \models \neg \varphi              & \mbox{ if{}f } & \path \not\models \varphi    \\
\path \models \varphi \wedge \psi & \mbox{ if{}f } & \path \models \varphi \text{ and } \path \models \psi   \\
\path \models \varphi \vee \psi   & \mbox{ if{}f } & \path \models \varphi \text{ or } \path \models \psi    \\
\path \models \X  \varphi      & \mbox{ if{}f } & \path^{\ge 1} \models \varphi \\
\path \models \varphi \, \U \, \psi & \mbox{ if{}f } & \exists k. \, \path^{\ge k} \models \psi \text{
                                      and } \forall j < k. \, \path^{\ge j} \models \varphi  \ .
\end{array}\]}%
We use the abbreviations $\true \coloneqq P \vee \neg P$, $\false := \neg \true$,
$\F \varphi \coloneqq \true \, \U \, \varphi$ (eventually $\varphi$ ) and $\G \varphi \coloneqq \neg \F \neg \varphi$ (always $\varphi$). 
We let $\lang{\varphi} \coloneqq \{ \path \in \stuniv^\omega : \path \models \varphi\}$ denote the language of infinite traces that satisfy $\varphi$. So, for example, $\G P$ denotes the infinite traces all whose states belong to $P$.

\paragraph{Deterministic Rabin Automata.}
A \emph{deterministic Rabin automaton} (DRA) is a tuple $\dra = (\draS, \draAl, \draTr, \draInit, \draAcc)$ consisting of a finite set $\draS$ of states, a finite alphabet $\draAl$,  a transition function $\draTr \colon \draS \times \draAl \to \draS$, an initial state $\draInit$, and an acceptance condition $\draAcc \subseteq 2^\draS \times 2^\draS$. A set of pairs of states $(F, G) \in \draAcc$ is called a \emph{Rabin pair}. An infinite word $w \in \draAl^\omega$ is accepted by $\dra$ if there is a Rabin pair $(F, G) \in \draAcc$ such that the unique run $\draInit q_1 q_2 \cdots$ of $\dra$ on $w$ visits $F$ infinitely often (i.e., $q_i \in F$ for infinitely many $i$), and every state of $G$ finitely often. 

We are interested in DRAs with $\Sigma=2^\atprop$ for some finite set $\atprop$. We say that such a DRA accepts an infinite trace $s_0 s_1 \ldots  \in \stuniv^\omega$ if it accepts the word $\mathcal{P}_0 \mathcal{P}_1 \cdots  \in (2^\atprop)^\omega$ where, for every $i\geq 0$,  $\mathcal{P}_i \subseteq \atprop$ is the set of atomic propositions that contain $s_i$. The language $\lang{\dra} \subseteq \stuniv^\omega$ of such a DRA is the set of all infinite traces it accepts.

We use the following fundamental result of automata theory (see e.g. \cite{BaierK2008,EsparzaKS20}): 
\begin{theorem}
For every LTL formula $\varphi$ of length $n$ over a finite set $\atprop$ of atomic propositions we can effectively construct  a  DRA over the alphabet $2^\atprop$ with $2^{2^{O(n)}}$ states such that $\lang{\dra}=\lang{\varphi}$.
\end{theorem}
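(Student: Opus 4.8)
The plan is to follow the classical two-step route for LTL-to-automata translation: first translate $\varphi$ into a nondeterministic Büchi automaton (NBA), then determinize that NBA into a DRA, and finally compose the state-complexity bounds of the two steps to obtain the claimed double exponential. Both ingredients are standard and are exactly what the cited references \cite{BaierK2008,EsparzaKS20} supply, so I would invoke their correctness as black boxes and concentrate on getting the bounds to compose correctly.

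First I would construct an NBA $\mathcal{B}$ with $\lang{\mathcal{B}} = \lang{\varphi}$ using the standard tableau (Vardi--Wolper) construction. The states of $\mathcal{B}$ are the \emph{elementary} (maximal consistent, expansion-closed) subsets of the closure of $\varphi$, i.e. of the set of subformulas of $\varphi$ together with their negations. Since $\varphi$ has length $n$, its closure contains $O(n)$ formulas, so there are at most $2^{O(n)}$ states. Transitions encode the one-step semantics of the temporal operators ($\X$ and the expansion law for $\U$), the letter read at a state is the set of atomic propositions occurring positively in it, and a (degeneralized) Büchi condition enforces the eventualities demanded by the until-subformulas. The identity $\lang{\mathcal{B}} = \lang{\varphi}$ is the usual soundness/completeness argument for this tableau. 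This yields $m := |\mathcal{B}| = 2^{O(n)}$.

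Next I would determinize $\mathcal{B}$ into a DRA $\dra$ with $\lang{\dra} = \lang{\mathcal{B}}$ via Safra's construction, whose states are bounded-height Safra trees of macrostates over the $m$ NBA states, with the Rabin pairs indexed by tree nodes; the known bound is a DRA with $2^{O(m \log m)}$ states and $O(m)$ Rabin pairs. Composing the two bounds is then pure arithmetic: with $m = 2^{O(n)}$ we have $\log m = O(n)$, hence $m \log m = 2^{O(n)} \cdot O(n) = 2^{O(n)}$, and therefore $\dra$ has $2^{O(m \log m)} = 2^{2^{O(n)}}$ states. The alphabet is $2^\atprop$ throughout, matching the statement, and the identification of an infinite trace over $\stuniv$ with a word over $2^\atprop$ (via the atomic propositions containing each visited state) was already fixed in the preliminaries, so $\lang{\dra} = \lang{\varphi}$ holds as an equality of languages of infinite traces.

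The hard part is Safra's determinization: establishing both its correctness---that the Safra-tree automaton with the induced Rabin condition accepts exactly $\lang{\mathcal{B}}$---and the $2^{O(m\log m)}$ state bound is the genuinely intricate step, and the clean collapse of the two exponentials into a single $2^{2^{O(n)}}$ hinges precisely on $m \log m$ remaining singly exponential in $n$. Since these are the classical results being cited, I would not reprove them, and would instead only verify in detail the arithmetic of the bound composition and the bookkeeping that matches the automaton's alphabet and acceptance to the trace semantics of LTL.
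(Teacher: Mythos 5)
Your proposal is correct and follows exactly the standard route (Vardi--Wolper tableau to a $2^{O(n)}$-state NBA, then Safra determinization to a $2^{O(m\log m)}$-state DRA, with the exponents composing to $2^{2^{O(n)}}$) that the paper itself relies on: the paper gives no proof of this theorem and simply cites \cite{BaierK2008,EsparzaKS20}, which establish it by precisely this two-step construction. Your bound arithmetic and the remark about identifying traces over $\stuniv$ with words over $2^\atprop$ are both sound.
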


\paragraph{Product Markov Chain.} The product of 
a DRA $\dra = (\draS, 2^{\atprop}, \draTr, \draInit, \draAcc)$ and a Markov chain $\Mc=(\St,\Pm,\init)$ is  the Markov chain $\Mcdra =(\draS \times \St, \Pm', \init')$, where 
\begin{itemize}
\item $\Pm'((q,s),(q',s')) = \Pm(s,s')$ if $q'=\draTr(q,\atprop_s)$, where $\atprop_s$ is the set of atomic propositions containing $s$, and $\Pm'((q,s),(q',s'))=0$ otherwise;  and 
\item $\init'(q,s) = \init(s)$ if $q=\draInit$ and $\init'(q,s)=0$ otherwise.
\end{itemize}
Note that $\Mcdra$ has the same transition probabilities as $\Mc$.

A run of $\Mcdra$ is \emph{good} if it satisfies $\varphi$, i.e., if it is accepted by $\dra$, and \emph{bad} otherwise. An SCC $B$ of $\Mcdra$ is \emph{good} if
there exists a Rabin pair $(F,G) \in \draAcc$ such that $B \cap (S\times F) \neq \emptyset$ and $B \cap (S\times G) = \emptyset$. Otherwise, the SCC is \emph{bad}.
Observe that good runs of $\Mcdra$ almost surely reach a good BSCC (i.e., more formally, the probability that a run satisfies~$\varphi$ and does not reach a good BSCC is~$0$), and bad runs almost surely reach a bad BSCC (i.e., more formally, the probability that a run does not satisfy~$\varphi$ and does not reach a bad BSCC is also~$0$).

\paragraph{Setting of the paper.}
We describe the general setting of the paper. We fix a Markov chain $\fchain$ under scrutiny with states  drawn from the state universe $\stuniv$. The to-be-designed monitor only knows that $\fchain$ belongs to the set $\underline{\mathcal M}$ of all finite-state Markov chains with states drawn from $\stuniv$ and whose transition probabilities are bounded from below by a constant $\pmin \in (0,1]$. We  fix a property of interest, formalized as an LTL formula $\varphi$ over a finite set of atomic propositions $\atprop \subseteq 2^\stuniv$.  Finally, we fix a  DRA $\fdra$ with set of states $\Q$ recognizing the language $\lang{\varphi} \subseteq \stuniv^\omega$ of infinite traces of $\fchain$ that satisfy $\varphi$.  

\begin{quote}
\noindent\textbf{Convention:} Underlined symbols like $\fchain$ or $\underline{M}$, possibly with subscripts or superscripts, denote elements of $\underline{\mathcal M}$. Non-underlined symbols like $\chain$ and $\Mc$, also possibly with subscripts or superscripts, denote elements of the set $\Mcs = \{ \fdra \otimes \underline{M} \colon \underline{M} \in \underline{\mathcal M} \}$ of product chains.  Notice that states of product chains are drawn from the set $\Q \times \stuniv$.  
\end{quote}
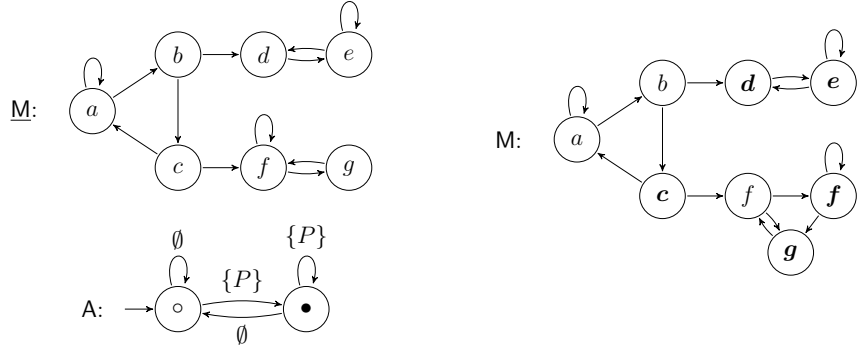
\begin{figure}[t]
	\centering
	\scalebox{0.75}{
		\begin{tikzpicture}
		\def\v{1}
		\def\h{1.5}
		\begin{scope}[xshift=0cm,yshift=0cm,font=\large]
		\node at (-1.2,0){$\fchain$:};
		\node[state] (a) at (0,0){$a$};
		\node[state] (b) at (\h,\v){$b$};
		\node[state] (d) at (2*\h,\v){$d$};
		\node[state] (e) at (3*\h,\v){$e$};
		\node[state] (c) at (\h,-\v){$c$};
		\node[state] (f) at (2*\h,-\v){$f$};
		\node[state] (g) at (3*\h,-\v){$g$};
		
		\path[->]
		(a) edge[loop above] ()
		(a) edge (b)
		(b) edge (c)
		(b) edge (d)
		(c) edge (a)
		(c) edge (f)
		(d) edge[bend right=10] (e)
		(e) edge[loop above] ()
		(e) edge[bend right=10] (d)
		(f) edge[bend right=10] (g)
		(f) edge[loop above] ()
		(g) edge[bend right=10] (f);
		\end{scope}
		\begin{scope}[xshift=1.5cm,yshift=-3.5cm,font=\large]
		\node at (-1.5,0){$\fdra$:};
		\node[state,initial, initial text=] (x) at (0,0){$\circ$};
		\node[state] (y) at (1.5*\h,0){$\bullet$};
		
		\path[->]
		(x) edge[loop above] node[above]{$\emptyset$} ()
		(x) edge[bend left=10] node[above]{$\{P\}$} (y)
		(y) edge[loop above] node[above]{$\{P\}$} ()
		(y) edge[bend left=10] node[below]{$\emptyset$} (x);		
		\end{scope}
		\begin{scope}[xshift=8.5cm,yshift=-0.5cm,font=\large]
		\node at (-1.2,0){$\chain$:};
		\node[state] (a) at (0,0){$\lcirc a$};
		\node[state] (b) at (\h,\v){$\lcirc b$};
		\node[state] (d) at (2*\h,\v){$\dcirc{d}$};
		\node[state] (e) at (3*\h,\v){$\dcirc{e}$};
		\node[state] (c) at (\h,-\v){$\dcirc{c}$};
		\node[state] (f) at (2*\h,-\v){$\lcirc f$};
		\node[state] (fb) at (3*\h,-\v){$\dcirc{f}$};
		\node[state] (g) at (2.5*\h,-2*\v){$\dcirc{g}$};
		
		\path[->]
		(a) edge[loop above] ()
		(a) edge (b)
		(b) edge (c)
		(b) edge (d)
		(c) edge (a)
		(c) edge (f)
		(d) edge[bend left=10] (e)
		(e) edge[loop above] ()
		(e) edge[bend left=10] (d)
		(f) edge (fb)
		(f) edge[bend left=10] (g)
		(fb) edge[loop above] ()
		(fb) edge (g)
		(g) edge[bend left=10] (f);
		\end{scope}
		\end{tikzpicture}
		}
	\caption{A Markov chain $\fchain$ under scrutiny (upper left, the
      transition probabilities and the initial probability
      distribution are not shown), a DRA $\fdra$ for the property
      $\F\G P$, where $P=\{b,d,e,f\}$ (lower left), and their product
      $\chain = \fdra \otimes \fchain$ (right). We have $\atprop=\{P\}$, and
      so the alphabet of $\fdra$ is $2^\atprop=\{ \emptyset, \{P\} \}$. The
      names of the states of $\chain$ have been abbreviated:  $(\circ, x)$
      to $x$ and $(\bullet, x)$ to $\dcirc{x}$ for $x \in \{a, \ldots, g\}$.}
	\label{fig:runningex}
\end{figure}

\begin{example}[Running example]
The left diagram of Figure \ref{fig:runningex} presents a Markov chain
$\fchain$ under scrutiny (unknown to the monitor). Probabilities and
initial distribution are omitted. The middle diagram shows  a DRA
$\fdra$ for the LTL formula $\varphi:=\F \G P$, where $P=\{b, d, e,
f\}$. The runs of $\fchain$ satisfying the formula are those that, from some moment
onwards, visit only states of $P$. For example, $ab(de)^\omega$ and
$abcf^\omega$ are accepting, but $abc(fg)^\omega$ is not. The DRA $\fdra$ has
one single Rabin pair $(F, G)$, where $F=\{\bullet\}$ and $G = \{\circ\}$; the
accepting runs of $\fdra$ eventually stay forever in state $\bullet$. The
product chain $\chain:= \fdra \otimes \fchain$ is shown on the right; states
of the form $(\circ, x)$ and $(\bullet, x)$ are abbreviated to $\lcirc x$ and
$\dcirc{x}$, respectively. For example, since $b \to d$ is a transition of $\fchain$, $b \in P$ and $\circ
\xrightarrow{\{P\}} \bullet$ is a transition of $\dra$, in the product chain
$\chain$ we have $(\circ, b) \to (\bullet, d)$. Observe that $\chain$ has two BSCCs, namely $\{\dcirc{d},
\dcirc{e}\}$ and $\{\lcirc f,\dcirc{f}, \dcirc{g}\}$. They are good and
bad, respectively.
\end{example}

\input{verdict}
\section{Computing the confidence score}

Let $\path$ be a closed finite trace, and assume w.l.o.g. $\verdict(\path)=\true$ (otherwise set $\varphi:= \neg \varphi$).
If the chain $\chain$ under scrutiny  satisfies $\probc{\chain}{\path} =1$ then, by definition, a run of $\chain$ extending $\path$ satisfies $\varphi$ with probability 1, and so the probability that the verdict is correct is also 1.  This  implies:
\begin{quote}
Our confidence in the statement ``$\chain$ satisfies $\probc{\chain}{\path} =1$'' is a lower bound for our confidence in the statement ``the verdict {\true} is correct.''
\end{quote}
For our confidence in $\probc{\chain}{\path} =1$ there is a standard statistical confidence measure: the \emph{likelihood ratio} (see e.g. \cite{lehmann2005testing}). Given a partition of the set $\Mcs$ of Markov chains into two subsets $\Mcs_0, \Mcs_1$ and an observation $\path$, the \emph{likelihood ratio} that $\chain$ belongs to $\Mcs_1$ is defined as 
\begin{equation*}
\label{eq:lkr}
\frac{\sup  \{  \Lk{\mon\Mc}{\path}  \colon \mon\Mc \in \mon\Mcs_1  \}}
        {\sup \{  \Lk{\mon\Mc}{\path}  \colon \mon\Mc \in \mon\Mcs_0  \}}
\end{equation*}
So we choose:
\begin{definition}
We let $\mon\Mcs_1 \coloneq\cset{\mon\Mc}{\probc{\mon\Mc}{\path}=1}$ and 
$\mon\Mcs_0 \coloneq\cset{\mon\Mc}{\probc{\mon\Mc}{\path}< 1}$.
\end{definition}

By Theorem  \ref{thm:vermax}, all chains with maximal  likelihood of generating $\path$ belong to $\Mcs_1$,
hence  $\sup  \{  \Lk{\mon\Mc}{\path}  \colon \mon\Mc \in \mon\Mcs_1  \}= \Lk{\mon\Mc_\path}{\path}$. So, intuitively, a likelihood ratio of 10 means that the probability of generating \(\path\) is at least 10 times higher in \(\mon\Mc_\path\), than in any Markov chain where the verdict might be incorrect with non-zero probability. 

We can now introduce our confidence score:
 
\begin{definition}
Let $\path$ be a trace ending in a state $(q, s) \in \Q \times \stuniv$.
The \emph{confidence score} $\conf(\path) \in [1, \infty) \cup \{\infty\}$ is defined as follows:
\begin{itemize}
\item If $q$ is an empty or universal state of $\fdra$, or $\path$ is open, then $\conf(\path):=\infty$.
\item Otherwise
\[
\conf(\path) \coloneq \displaystyle\frac{\Lk{\mon\Mc_\path}{\path} }
                            {\sup \{  \Lk{\mon\Mc}{\path}  \colon \mon\Mc \in \mon\Mcs_0  \}} 
\]
\end{itemize}
\end{definition}

\begin{remark}
Recall that if $q$ is an empty or universal state of $\dra$ then the verdict is necessarily correct because \emph{every} run extending $\path$ satisfies $\varphi$. So in this case we have unbounded confidence in the verdict. If $q$ is neither universal nor empty but $\path$ is open, then the verdict is ``?''. The confidence in this verdict can be defined arbitrarily\footnote{Our choice corresponds to the monitor declaring  ``I have unbounded confidence in my ignorance.''}.
\end{remark}

We use the assumption that transitions of chains in $\mon\Mcs$ have at least probability $\pmin > 0$ to obtain a lower bound on $\conf(\path)$. We start with a definition.

\begin{definition}
\label{def:B}
Let $\path=r_0 \ldots r_n$ be a closed trace and let $B$ be the unique BSCC of the graph of $\mon{M}_\path$. For every state $r \in B$, we let $\#_\path(r)$ denote the number of times that $r$ appears in  $r_0 \cdots r_{n-1}$, and define $m_\path \coloneq \min_{r \in B} \{ \#_\path(r) \}$.
\end{definition}
Loosely speaking, $\#_\path(r)$ denotes the number of times that $\path$ \emph{leaves} the state $r$, and $m_\path$ is the minimum number of times that $\path$ leaves any of the states of $B$.
\begin{definition}
  \label{def:lwb}
  Let $\path$ be a closed trace. We define
  \begin{equation}
    \label{eq:lwb-def}
    \conflb(\path) \coloneq \left(\frac{1}{1-\pmin}\right)^{m_{\path}}
  \end{equation}
\end{definition}
\begin{theorem}
\label{thm:lwb}
For every closed path $\path$ and every Markov chain $\mon\Mc \in \mon{\Mcs}_{0}$:
\[
\Lk{\mon{\Mc}_\path}{\path} \geq \conflb(\path)  \cdot {\Lk{\mon{\Mc}}{\path}}  \ .
\]
In particular, $\conf(\path) \geq \conflb(\path)$.
\end{theorem}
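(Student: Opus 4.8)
The plan is to first reduce the displayed inequality to the confidence bound, and then to prove the per-chain inequality $\Lk{\mon{M}_\path}{\path}\ge \conflb(\path)\cdot\Lk{\mon\Mc}{\path}$ for each fixed $\mon\Mc\in\mon\Mcs_0$. If $\Lk{\mon\Mc}{\path}=0$ the inequality is trivial, so I assume $\Lk{\mon\Mc}{\path}>0$; this forces every transition of $\path$ to carry positive probability in $\mon\Mc$ and $\init(r_0)>0$. Once the per-chain inequality holds for all $\mon\Mc\in\mon\Mcs_0$, rewriting it as $\Lk{\mon\Mc}{\path}\le \Lk{\mon{M}_\path}{\path}/\conflb(\path)$ and taking the supremum over $\mon\Mcs_0$ gives $\sup_{\mon\Mc\in\mon\Mcs_0}\Lk{\mon\Mc}{\path}\le \Lk{\mon{M}_\path}{\path}/\conflb(\path)$, which is exactly $\conf(\path)\ge\conflb(\path)$ by the definition of the confidence score (its numerator being $\Lk{\mon{M}_\path}{\path}=\sup_{\mon\Mcs_1}\Lk{\cdot}{\path}$ by Theorem~\ref{thm:vermax}). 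Throughout I use the assumption $\verdict(\path)=\true$ fixed at the start of the section, so that $\probc{\mon{M}_\path}{\path}=1$ and the unique BSCC $B$ of $\mon{M}_\path$ is good.

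Second, I would factor both likelihoods state by state. Writing $c_{rs}=\Tr_\path(r,s)$, $\lambda_r=\sum_s c_{rs}=\#_\path(r)$ and $q_{rs}=\Pm(r,s)$, we have $\Lk{\mon{M}_\path}{\path}=\prod_{r\in\St_\path}\prod_s (c_{rs}/\lambda_r)^{c_{rs}}$ and $\Lk{\mon\Mc}{\path}=\init(r_0)\prod_{r\in\St_\path}\prod_s q_{rs}^{c_{rs}}$, where $s$ ranges over the observed successors of $r$. Define the per-state ratio $\rho_r:=\prod_s (c_{rs}/\lambda_r)^{c_{rs}}/\prod_s q_{rs}^{c_{rs}}$. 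Since $\sum_s q_{rs}\le 1$, the per-state optimization already carried out in the proof of Theorem~\ref{thm:mlk} gives $\prod_s q_{rs}^{c_{rs}}\le\prod_s (c_{rs}/\lambda_r)^{c_{rs}}$, i.e.\ $\rho_r\ge 1$ for every $r$; together with $\init(r_0)\le 1=\init_\path(r_0)$ this merely re-derives $\Lk{\mon{M}_\path}{\path}\ge\Lk{\mon\Mc}{\path}$. The task is thus to exhibit one state carrying the extra factor $\conflb(\path)=(1/(1-\pmin))^{m_\path}$.

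Third, and this is the crux, I would locate a state $r^*\in B$ at which $\mon\Mc$ diverts probability mass off the observed transitions. Since $r_n\in B$ (the last state of a closed trace lies in the unique BSCC of $\mon{G}_\path$, by the maximal-index argument of Lemma~\ref{lem:onebscc}) and $B$ is a bottom SCC of $\mon{G}_\path$, every observed transition out of a state of $B$ stays inside $B$. Suppose, for contradiction, that in $\mon\Mc$ no state of $B$ has an outgoing transition leaving $B$. Then $B$ is absorbing in $\mon\Mc$ and, since $\mon\Mc$ retains all observed transitions, $B$ is strongly connected; hence $B$ is a BSCC of $\mon\Mc$, good because goodness of $B$ is a structural property of its vertex set and the Rabin pairs. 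A run started at $r_n\in B$ then never leaves $B$ and almost surely satisfies $\varphi$, forcing $\probc{\mon\Mc}{\path}=1$ and contradicting $\mon\Mc\in\mon\Mcs_0$. Therefore some $r^*\in B$ has in $\mon\Mc$ a transition to a state outside $B$; as $B$ is bottom in $\mon{G}_\path$ this transition is unobserved, and since every transition of $\mon\Mc$ has probability at least $\pmin$, the observed out-mass of $r^*$ satisfies $\sum_s q_{r^*s}\le 1-\pmin$.

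Finally, I would quantify the local ratio at $r^*$. Maximizing $\prod_s q_{r^*s}^{c_{r^*s}}$ subject to $\sum_s q_{r^*s}\le\beta$ for $\beta:=1-\pmin$ is attained at $q_{r^*s}=\beta\,c_{r^*s}/\lambda_{r^*}$ (the per-state MLE scaled by $\beta$), with value $\beta^{\lambda_{r^*}}\prod_s(c_{r^*s}/\lambda_{r^*})^{c_{r^*s}}$, so that $\rho_{r^*}\ge \beta^{-\lambda_{r^*}}=(1/(1-\pmin))^{\#_\path(r^*)}$. Combining with $\rho_r\ge 1$ for every other state and $\init(r_0)\le 1$ yields $\Lk{\mon{M}_\path}{\path}/\Lk{\mon\Mc}{\path}\ge(1/(1-\pmin))^{\#_\path(r^*)}$; since $r^*\in B$ we have $\#_\path(r^*)\ge m_\path$, and as $1/(1-\pmin)\ge 1$ the exponent may be lowered to $m_\path$, giving $\Lk{\mon{M}_\path}{\path}\ge\conflb(\path)\cdot\Lk{\mon\Mc}{\path}$. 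I expect the main obstacle to be the structural argument of the third paragraph: showing that membership in $\mon\Mcs_0$ forces a genuine, $\pmin$-quantifiable leak precisely at a state of $B$ (rather than at some other reachable state), which is what ties the bound to $m_\path$. The constrained optimization and the supremum reduction are routine by comparison.
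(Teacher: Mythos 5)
Your proof is correct and follows essentially the same route as the paper: both hinge on the observation that $\mon\Mc\in\mon\Mcs_{0}$ forces an unobserved escape transition of probability at least $\pmin$ out of some state $r^*$ of the unique (good) BSCC $B$ of $\mon{G}_\path$, a state which $\path$ leaves $\#_\path(r^*)\ge m_\path$ times, yielding the factor $(1/(1-\pmin))^{m_\path}$. The only difference is presentational: the paper extracts this factor by building an auxiliary chain $\mon\Mc'$ (deleting the escape transition and redistributing its mass) and then invoking the maximality of $\mon{M}_\path$, whereas you factor the likelihood ratio state by state and rerun the constrained optimization at $r^*$ with budget $1-\pmin$; both are sound.
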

\begin{proof}
Let $\mon\Mc=(\St,\Pm,\init) \in  \mon{\Mcs}_{0}$. If 
$\Lk{\mon\Mc}{\path} = 0$ we are done. Assume $\Lk{\mon\Mc}{\path} > 0$.  Then the graph \(G_\path\) containing the states and transitions of $\path$  is a subgraph of \(\mon\Mc\). Let \(B\) be the unique BSCC of \(G_\path\). If \(B\) is also a BSCC of \(\mon\Mc\), then  $\probc{\mon\Mc}{\path}= 1$, contradicting the assumption $\mon\Mc \in \mon{\Mcs}_{0}$. Hence \(B\) is not a BSCC of \(\mon\Mc\), and so there exist states $r_B, \overline{r}_B \in \stuniv$ such that 
$r_B \in B$, $\overline{r}_B \notin B$, and $\Pm(r_B,\overline{r}_B)> 0$. 
Let \(\mon\Mc'\coloneq(\St,\Pm',\init)\) be the Markov chain with
\begin{equation}
  \label{eq:Pm'-def}
  \Pm'(r,r')\coloneq
  \begin{cases*}
    0 & if \(r = r_B\) and \(r' = \overline{r}_B \)\\
    \frac{\Pm(r,r')}{1-\Pm(r_B,\overline{r}_B)} & if \(r = r_B\) and \(r' \neq \overline{r}_B\)\\
    \Pm(r,r') & otherwise
  \end{cases*}
\end{equation}
(Loosely speaking, we remove the transition $(r_B, \overline{r}_B)$ from $\mon\Mc$ and distribute its probability among the other output transitions of $r_B$.) 

We compare the likelihoods of \(\mon\Mc\) and \(\mon\Mc'\). Recall that $\Tr_{\path}(r,r')$ denotes the number of times that $r\,r'$ appears in $\path$. We have:
\begin{align*}
 \frac{\Lk{\mon{\Mc}_\path}{\path}}{ \Lk{\mon{\Mc}}{\path}}  & \geq  \frac{\Lk{\mon{\Mc'}}{\path}}{ \Lk{\mon{\Mc}}{\path}} 
    =      \prod_{r \in \St} \prod_{r' \in \St}  \left(\frac{\Pm'(r, r')}{\Pm(r, r')}\right)^{\Tr_{\path}(r,r')} \\
& \stackrel{(\ref{eq:Pm'-def})}{=}       
      \prod_{r' \in \St} \left(\frac{1}{1-\Pm(r_B,\overline{r}_B)}\right)^{\Tr_{\path}(r_B,r')}  \geq  \prod_{r' \in \St} \left(\frac{1}{1-\pmin}\right)^{\Tr_{\path}(r_B,r')} \\
&    =       \left(\frac{1}{1-\pmin}\right)^{\sum_{r' \in \St} \Tr_{\path}(r_B,r')} 
   =       \left(\frac{1}{1-\pmin}\right)^{\#_\path(r_B)}   \geq  \left(\frac{1}{1-\pmin}\right)^{m_\path} 
\end{align*}
\noindent which concludes the proof.
\end{proof}

\begin{remark}
For closed paths not ending in an empty or universal state we can also
do a similar construction in reverse, proving that $\conflb(\path) =
\conf(\path)$. Loosely speaking, we start with the Markov chain
$\mon{\Mc}_\path$. There exists a state $r$ in the unique BSCC of
$\mon{\Mc}_{\path}$, which was visited $m_{\path}$ times. To this
state we add a new ``escape transition'', with transition probability
$c \ge \pmin$ leading to a new BSCC where good runs have probability
$0$. The old transition probabilities get rescaled by a factor $1-c$ to
compensate. The resulting Markov chain $\mon{\Mc}_{c}$ then has
likelihood $\Lk{\mon{\Mc}_{c}}{\path} =
(1-c)^{m_{\path}}\Lk{\mon{\Mc}_{\path}}{\path} $, but runs extending
$\path$ now satisfy $\varphi$ with probability $0$, so $\mon{\Mc}_{c}
\in\Mcs_{0}$. This also illustrates why we require $\pmin > 0$. Without
this restriction we could make $c$ arbitrarily small (but still
positive to ensure $\mon{\Mc}_{c} \in \Mcs_{0}$). This would result in
the vacuous confidence score $\conf(\path) \le \sup
\set{\frac{\Lk{\mon{\Mc}_{\path}}{\path}}{\Lk{\mon{\Mc}_{c}}{\path}} \mid
c > 0} = 1$.
\end{remark}

\begin{example}
Consider again the traces $\path_1 = a^3bca^2b$, $\path_2 = a^5b(\dcirc{d}\dcirc{e}^2)^3$ and $\path_3 = a^2bcf \dcirc{f}^3(\dcirc{g}f)^2$ of Example \ref{ex:paths}. For $\path_1$ the BSCC is $\{a, b, c\}$ and we have $m_{\path_1} = \#_{\path_1}(b)=1$. So $\conflb(\path_1) = 1/(1 - \pmin)$. For 
$\path_2$ the BSCC is $\{\dcirc{d}, \dcirc{e}\}$, $m_{\path_2} = \#_{\path_2}(d)=3$, and $\conflb(\path_2) = (1/(1 - \pmin))^3$. Finally, for $\path_3$ the BSCC is $\{f, \dcirc{f}, \dcirc{g}\}$,  
$m_{\path_3} = \#_{\path_3}(f)=2$ and $\conflb(\path_2) = (1/(1 - \pmin))^2$.
\end{example}

We finish with a proposition stating that the confidence of the monitor tends to infinity almost surely as it observes longer and longer prefixes of a run. 

\begin{proposition}
\label{prop:limit}
Given an infinite trace $\run = r_0 r_1 \cdots \in \stuniv^\omega$ let $\run^{\geq i} := r_i r_{i+1} \cdots$ for every $i \geq 0$, and let $\conflb_{\text{lim}}$ be the random variable given by $\conflb_{\text{lim}}(\run):= \liminf_{i\rightarrow\infty} \conflb(\run^{\geq i})$. For every Markov chain $\mon\Mc \in \mon\Mcs$, we have  $\pr_{\mon\Mc}\left(\conflb_{\lim} = \infty\right)=1$.
\end{proposition}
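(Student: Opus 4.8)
The plan is to reduce the proposition to a statement about how often the growing prefixes of $\run$ revisit the states of a single bottom SCC. Writing $\path_n := r_0 \cdots r_n$ for the prefix observed up to time $n$, the relevant quantity is $\conflb(\path_n)$, so that $\conflb_{\lim}(\run) = \liminf_n \conflb(\path_n)$ (these prefixes are eventually closed, so $\conflb$ is defined for all large $n$). Since $\pmin > 0$ gives $\tfrac{1}{1-\pmin} > 1$, the value $\conflb(\path_n) = \left(\tfrac{1}{1-\pmin}\right)^{m_{\path_n}}$ tends to $\infty$ precisely when its exponent $m_{\path_n}$ does, so it suffices to show $m_{\path_n} \to \infty$ along almost every run.

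First I would fix $M \in \Mcs$ and invoke the standard fact already used in the proof of Theorem~\ref{thm:vermax}: since $M$ is a finite Markov chain, with probability $1$ a run $\run$ reaches some BSCC $C$ of $M$, never leaves it afterwards, and visits every state of $C$ infinitely often. Call this almost sure event $E$; it then suffices to establish $\conflb_{\lim}(\run) = \infty$ for each $\run \in E$.

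For $\run \in E$, the heart of the argument is to locate the unique BSCC of the prefix graph $G_{\path_n}$ and show it stabilizes to $C$. Let $T$ be the (finite) first entry time into $C$, so that $r_i \in C$ exactly for $i \geq T$. Because $C$ is finite, is never left after $T$, and has each of its states visited infinitely often, there is a finite time $T'$ such that for all $n \geq T'$ the prefix $\path_n$ is closed, every state of $C$ occurs in $\path_n$, and the sub-paths of $\run$ lying inside $C$ already make $C$ strongly connected in $G_{\path_n}$. I then claim $C$ is the unique BSCC of $G_{\path_n}$ for $n \geq T'$: since $C$ is bottom in $M$, every recorded transition out of a state of $C$ again ends in $C$, so $C$ has no outgoing edge in $G_{\path_n}$ and is therefore a bottom SCC, and Lemma~\ref{lem:onebscc} forces it to be the unique one. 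This identification is the main obstacle, as it combines the bottomness of $C$ in $M$ (guaranteeing no escape edges) with the recurrence of its states (guaranteeing enough recorded edges for strong connectivity) and the uniqueness of Lemma~\ref{lem:onebscc}.

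Finally, for $n \geq T'$ we have $m_{\path_n} = \min_{r \in C} \#_{\path_n}(r)$, and each count $\#_{\path_n}(r)$ of occurrences of $r$ in $r_0 \cdots r_{n-1}$ tends to $\infty$ by the infinite recurrence of $r$ on $E$. As $C$ is finite the minimum also tends to $\infty$, so $m_{\path_n} \to \infty$ and hence $\conflb(\path_n) \to \infty$ (the base exceeds $1$ since $\pmin \in (0,1)$; the degenerate case $\pmin = 1$ is immediate). Therefore $\conflb_{\lim}(\run) = \liminf_n \conflb(\path_n) = \infty$ on the probability-one event $E$, which yields $\pr_{M}(\conflb_{\lim} = \infty) = 1$.
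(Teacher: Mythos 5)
Your proposal is correct and follows exactly the same route as the paper's (much terser) proof: invoke the almost-sure event that the run gets trapped in a BSCC $C$ and visits each of its states infinitely often, identify $C$ as the eventual unique BSCC of the prefix graph, and conclude that $m_{\path}$, hence $\conflb(\path)$, tends to infinity. Your version simply fills in the details the paper leaves implicit (in particular the stabilization of the prefix graph's BSCC to $C$, and the reading of $\run^{\geq i}$ as the $i$-th prefix rather than the suffix the statement literally defines), and these details are all sound.
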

\begin{proof}
Follows immediately from the fact that, with probability 1, a run of
$\mon\Mc$ eventually enters a BSCC of \(\mon\Mc\) and then visits
every state of the BSCC infinitely often.  So $m_\path$ a.s. tends to
infinity for longer and longer prefixes $\path$ of the run, making
$\conflb(\path)$ also tend to infinity a.s.
\end{proof}


\section{Complexity}
The monitor has to compute verdict and confidence on the fly, updating it each time the current trace is extended with a new state. In \cite{EKKW21}, which discussed runtime enforcement of LTL properties, Esparza \etal presented an algorithm for computing the \emph{complete sequence of verdicts} for all the prefixes of a trace $\path$ of length $n$  in $O(n \log n)$ time (i.e., in $O(\log n)$ amortized time) and $O(n)$ space. Here we briefly discuss how to trade space for time. 

\begin{definition}
For every finite trace $\path$, let $\sccseq_\path$ denote the sequence of SCCs of $\mon{G}_\path$ sorted according to the total order $\preceq$ (see Definition \ref{def:sccorder}).  Further, for every $k \in \mathbb{N}$, let $\sccseq_\path[k]$ denote the largest suffix of $\sccseq_\path$ such that the total number of states in all SCCs of $\sccseq_\path[k]$, called the \emph{size} of $\sccseq_\path[k]$ is at most $k$.
\end{definition}

The algorithm of \cite{EKKW21} maintains variables $\sccv$ and $\visit$ satisfying $\sccv=\sccseq_\path$, and $\visit(r)= \#_\path(r)$ (the number of times $\path$ leaves $r$) for every state $r$ in $\sccseq_\path$ and for every trace $\path$.  We define a new algorithm that, on top of $\sccv$ and $\visit$, maintains an integer bound $\bound$  such that $\sccv=\sccseq_\path[\bound]$ and $\visit(r)= \#_\path(r)$ for every state $r$ of $\sccv$. 

Intuitively, before adding a new SCC to $\sccv$, the new algorithm first checks if the size of $\sccv$ would then exceed the current value of $\bound$. If so, it deletes the first SCC from $\sccv$, adds the new one, and increases $\bound$ by $1$.

\begin{itemize}
\item Initialization: $\bound:=0$, $\sccseq:= \varepsilon$, and $\visit$ is the empty table.
\item Assume the algorithm has sampled a finite trace $\path$ so far, and the current values of $\sccseq$ is  $S_1 S_2 \cdots S_\ell$. Assume the next transition sampled from $\chain$ is $(r, r')$. 
The algorithm sets $\visit(r):= \visit(r) +1$, and then proceeds as follows: 
\begin{itemize}
\item If $\visit(r')>0$ (that is, if $r'$ was already been visited before), then $\bound$ does not change and
$\sccv:= S_1 \cdots S_{\ell'-1}\bigcup_{i=\ell'}^{\ell}S_{i}$, where $S_{\ell'}$ with $\ell' \leq \ell$ is the SCC containing $r'$.
\item If $\visit(r') = 0$ and $\sum_{i=1}^\ell |S_i| < \bound$, then $\bound$ does not change and $\sccv:= S_1 \cdots S_k \{r'\}$.
\item If $\visit(r') = 0$ and $\sum_{i=1}^\ell |S_i| = \bound$, then $\bound:= \bound+1$, $\sccv:= S_2 \cdots S_\ell \{r'\}$, and $\visit(s):=0$ for every $s \in S_1$.
\end{itemize}
\end{itemize}

For every trace $\path$, the algorithm returns a verdict and a confidence level. Let $\sccv_\path$ and $\visit_\path$ be the values of $\sccv$ and $\visit$ after $\path$. The algorithm computes whether the last SCC of $\sccv_\path$ is accepting or not, and answers {\true} or {\false} accordingly. The confidence is computed as $\left(1/(1 -\pmin)\right)^{m_\path}$, where $m_\path$ is computed from $\visit$ according to its definition (Definition \ref{def:B}).

Let us call the monitor that uses the new algorithm the \emph{memory-saving} monitor.

\begin{proposition}
Let $\conflb'(\path)$ be the confidence returned by the memory-saving monitor on a trace $\path$.
Define $\conflb'_{\lim}$ in the same way as $\conflb_{\lim}$ (see Proposition \ref{prop:limit}), replacing $\conflb$ by $\conflb'$.  
\begin{enumerate}
\item For every Markov chain $\mon\Mc \in \mon\Mcs$, we have  $\pr_{\mon\Mc}\left( \conflb'_{\lim} = \infty \right)=1$.
\item The size of the variable $\sccv$ is bounded at all times by the number of states of the largest SCC of $\chain$.
\end{enumerate}
\end{proposition}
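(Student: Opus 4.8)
The plan is to prove the two claims separately, since they concern different properties of the memory-saving monitor.

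\medskip\noindent\textbf{Part 2 (the space bound).}
I would start with the bound on the size of $\sccv$, as it is the more structural claim and its proof clarifies what invariant the algorithm maintains. The key observation is that $\sccv = \sccseq_\path[\bound]$, so the size of $\sccv$ is at most $\bound$ by definition of $\sccseq_\path[k]$. Hence it suffices to bound $\bound$ by the size of the largest SCC of $\chain$, call it $\maxsizescc$. The variable $\bound$ is only ever incremented (by $1$) in the third case, precisely when the algorithm is about to add a genuinely new state $r'$ while the current size already equals $\bound$. I would argue by contradiction: suppose at some point $\bound$ is incremented from a value $b$ to $b+1$ with $b \geq \maxsizescc$. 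At that moment $\sccv$ has size exactly $b \geq \maxsizescc$, so $\sccv$ contains strictly more states than the largest SCC of $\chain$. But the SCCs in $\sccv$ are SCCs of $\mon{G}_\path$, which is a subgraph of (the graph of) $\chain$; hence each SCC of $\mon{G}_\path$ is contained in an SCC of $\chain$, and in particular has size at most $\maxsizescc$. The incrementation happens only when the size equals $\bound$ and a new singleton SCC $\{r'\}$ is about to be appended, which means the \emph{last} (i.e. $\preceq$-maximal, hence ``bottom-most'') SCC under construction is not yet closed. The crucial point is that $\bound$ can only grow to accommodate SCCs that are still being discovered, and once the genuine bottom SCC of $\chain$ that the run is heading into is fully explored, no new states arrive to trigger further incrementation. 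I would make this precise by showing that each incrementation of $\bound$ corresponds to the discovery of a new state of the currently-forming bottom SCC, and that the number of such states cannot exceed $\maxsizescc$. This is the step I expect to be the main obstacle: one must carefully track that the deletions in the third case (removing $S_1$ and resetting its $\visit$ counters) never discard states that later turn out to belong to the eventual bottom SCC, so that the states accumulated in $\sccv$ at any time all lie within a single SCC of $\chain$ of size at most $\maxsizescc$.

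\medskip\noindent\textbf{Part 1 (almost-sure divergence of confidence).}
For the limit statement, I would reduce it to Proposition~\ref{prop:limit}. With probability $1$, a run of $\mon\Mc$ eventually enters some BSCC $B_\chain$ of $\chain$ and thereafter visits every state of $B_\chain$ infinitely often. Fix such a run $\run$. After the run has entered $B_\chain$ and visited all its states, every sufficiently long prefix $\path$ has its unique BSCC (Lemma~\ref{lem:onebscc}) equal to $B_\chain$, and the memory-saving monitor, once $\bound$ has grown to at least $|B_\chain| \le \maxsizescc$, maintains $\sccv$ containing exactly this bottom SCC together with its correct $\visit$ counters. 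The point is that from this moment on the $\visit$ counters of the states of $B_\chain$ are no longer reset (resets happen only to the deleted first SCC $S_1$, which is a strictly $\preceq$-smaller SCC and hence never the bottom one the run is trapped in). Therefore $m_\path$, computed by the memory-saving monitor from $\visit$, agrees with $\min_{r \in B_\chain}\#_\path(r)$ for all large prefixes, which tends to infinity as each state of $B_\chain$ is visited infinitely often. Consequently $\conflb'(\path) = (1/(1-\pmin))^{m_\path} \to \infty$ along the run, giving $\conflb'_{\lim}(\run) = \infty$, and this holds on a set of runs of probability $1$.

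\medskip
The common thread tying both parts together is the invariant that the states retained in $\sccv$ always form a collection of SCCs of $\mon{G}_\path$ whose $\preceq$-maximal element is a prefix of the eventual bottom SCC of the run; once that bottom SCC is entered and fully revealed, $\sccv$ stabilizes to it and both the size bound and the correctness of the $m_\path$ computation follow. I would therefore first establish this invariant as a lemma and then derive both statements from it.
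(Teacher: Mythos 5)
Your treatment of Part 1 is essentially the paper's own argument: restrict attention to the probability-one set of runs that enter a BSCC $S_\rho$ of $\chain$ and visit all of its states infinitely often, observe that from some moment on the last SCC of $\sccv$ is always contained in $S_\rho$, that $\bound$ eventually reaches $|S_\rho|$ so that the last SCC of $\sccv$ stabilizes to $S_\rho$ and its $\visit$ counters are never again reset, and conclude that $m_\path \to \infty$ and hence $\conflb'(\path)\to\infty$. That part is fine and at the same level of rigor as the paper.

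The problem is Part 2. The invariant on which you rest the space bound --- that the states retained in $\sccv$ always lie inside a single SCC of $\chain$, and that each increment of $\bound$ corresponds to the discovery of a new state of the ``currently-forming bottom SCC'' --- is false. Take $\chain$ whose graph is a simple path $v_1\to v_2\to\cdots\to v_N$ ending in a self-loop at $v_N$: every SCC of $\chain$ is a singleton, so the bound to be proved is $1$. Run the algorithm on the trace $v_1v_2\cdots$. Every transition discovers a fresh state, so only the second and third cases of the update fire; in the third case the evicted component $S_1$ is a singleton, so the size of $\sccv$ goes from $\bound$ to $\bound-1+1=\bound$ while $\bound$ increases by one, and on the next step the second case fires and the size grows to the new value of $\bound$. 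Hence both $\bound$ and the total size of $\sccv$ grow linearly in the length of the trace, and $\sccv$ simultaneously holds states from many distinct SCCs of $\chain$. Your contradiction argument therefore never gets off the ground: increments of $\bound$ happen throughout the transient phase of the run, not only while the eventual bottom SCC is being assembled. (The same example shows that the claim is considerably more delicate than the paper's ``follows immediately from the description of the algorithm'' suggests: with the paper's definition of SCC, which admits trivial singletons, the stated bound fails during the transient phase and is accurate only for, say, the last SCC of $\sccv$ or for the eventual, stabilized contents of $\sccv$.) As written, your Part 2 cannot be completed; it would require either changing the quantity being bounded or adding a hypothesis that controls the acyclic stretches of $\chain$.
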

\begin{proof}
Part (2)  follows immediately from the description of the algorithm.
For part (1), recall that the set of runs that reach some BSCC of $\chain$ and then visit all its states infinitely often has probability 1. So it suffices to show that every such run, say $\rho$, satisfies $ \conflb'_{\lim} (\rho)= \infty$. 

After $\rho$ reaches a BSCC , say $S_\rho$, the last SCC of $\sccv$ is
always a subset of $S_\rho$. Therefore, from some moment on we have
$\bound \geq |S_\rho|$, and so, from some moment on, the last SCC of $\sccv$
is equal to $S_\rho$. Further, the number of visits to each state of
$S_\rho$ tends to $\infty$. It follows that the  $\conflb'_{\lim}$ also tends
to $\infty$.
\end{proof}

\section{An illustrative experiment}
The main interest of our paper is conceptual: it gives a statistically sound answer to the natural question of estimating our confidence that a given finite trace will develop into a run satisfying a given property. In this section we illustrate a possible application to black-box testing of LTL properties in stochastic systems. For safety or co-safety properties one can conduct a number of tests, each of them consisting of sampling the system for a given number of steps, and stopping the test whenever the property is violated. Monitoring the violation can be done using the monitor of Bauer \etal \cite{BauerLS11}. For liveness properties, however, this monitor always answers ``inconclusive''. Our monitor allows for a better approach: in each test, sample the system until a given confidence level is reached.

We conduct a little experiment illustrating that our approach is especially suitable for systems where the maximal size of an SCC is small compared to the total number of states.

\usetikzlibrary{positioning}
\begin{figure}[t]
  \centering
\begin{tikzpicture}[
  state/.style={circle, draw, minimum size=9mm, inner sep=1pt},
  accepting/.style={state, double},
  every loop/.style={looseness=6},
  every node/.style={scale=0.63},
]

\node[state] (am)      {$a_{-l}$};
\node (adotsl)  [right=0.4cm of am]    {$\cdots$};
\node[state] (ami1)   [right=0.4cm of adotsl] {$a_{-1}$};
\node[state] (a0)     [right=0.4cm of ami1]   {$a_0$};
\node[state] (ap1)    [right=0.4cm of a0]     {$a_1$};
\node (adotsr)  [right=0.4cm of ap1]    {$\cdots$};
\node[state] (ap)     [right=0.4cm of adotsr] {$a_r$};
\node (init) [above=0.5cm of a0] {};

\node[accepting]  (b0)     [below=1.5cm of ami1]      {$b_0$};
\node[state]      (b1)     [left=0.4cm of b0]      {$b_1$};
\node (bdots) [left=0.4cm of b1]      {$\cdots$};
\node[state]      (b_nm1)  [left=0.4cm of bdots] {$b_{n-1}$};
\node[state]      (bn)     [left=0.4cm of b_nm1]     {$b_n$};

\node[accepting]  (c0)     [below=1.5cm of ap1]     {$c_0$};
\node[accepting]  (c1)     [right=0.4cm of c0]      {$c_1$};
\node[accepting]  (c2)     [right=0.4cm of c1]      {$c_2$};
\node (cdots)   [right=0.4cm of c2]       {$\cdots$};
\node[accepting]  (cm1)    [right=0.4cm of cdots]   {$c_{m-1}$};
\node[state]      (cm)     [right=0.4cm of cm1]    {$c_m$};

\draw[->] (init) to (a0);

\draw[->, bend left=30] (am) to    node[above] {$1 - p$} (adotsl);
\draw[->, bend left=30] (adotsl) to node[above] {$1 - p$} (ami1);
\draw[->, bend left=30] (ami1) to  node[above] {$1 - p$} (a0);
\draw[->, bend left=30] (a0) to    node[above] {$1 - p$} (ap1);
\draw[->, bend left=30] (ap1) to   node[above] {$1 - p$} (adotsr);
\draw[->, bend left=30] (adotsr) to node[above] {$1 - p$} (ap);

\draw[->, bend left=30] (ap) to    node[below] {$p$} (adotsr);
\draw[->, bend left=30] (adotsr) to node[below] {$p$} (ap1);
\draw[->, bend left=30] (ap1) to   node[below] {$p$} (a0);
\draw[->, bend left=30] (a0) to    node[below] {$p$} (ami1);
\draw[->, bend left=30] (ami1) to node[below] {$p$} (adotsl);
\draw[->, bend left=30] (adotsl) to node[below] {$p$} (am);

\draw[->, out=300, in=90, looseness=1.5] (am) to node[above] {$p$} (b0);
\draw[->, out=240, in=90, looseness=1.5] (ap) to node[below] {$1-p$} (c0);
\draw[->, bend left=30] (bn) to    node[above] {$1 - q$} (b_nm1);
\draw[->, bend left=30] (b_nm1) to node[above] {$1 - q$} (bdots);
\draw[->, bend left=30] (bdots) to node[above] {$1 - q$} (b1);
\draw[->, bend left=30] (b1) to    node[above] {$1 - q$} (b0);

\draw[->, bend left=30] (b0) to    node[below] {$q$} (b1);
\draw[->, bend left=30] (b1) to    node[below] {$q$} (bdots);
\draw[->, bend left=30] (bdots) to node[below] {$q$} (b_nm1);
\draw[->, bend left=30] (b_nm1) to node[below] {$q$} (bn);

\draw[->] (b0) to[loop right] node[above] {$q$} ();
\draw[->] (bn) to[loop left]  node[left]  {$1 - q$} ();

\draw[->] (c0) to   node[above] {$r$}   (c1);
\draw[->] (c1) to   node[above] {$r$}   (c2);
\draw[->] (c2) to   node[above] {$r$}   (cdots);
\draw[->] (cdots) to node[above] {$r$}   (cm1);
\draw[->] (cm1) to  node[above] {$r$}   (cm);

\draw[->, bend left=35] (c1) to node[below] {$1 - r$} (c0);
\draw[->, out=240,in=270] (c2) to node[below] {$1 - r$} (c0);
\draw[->, out=240, in=240] (cm1) to node[below] {$1 - r$} (c0);
\draw[->] (c0) to[loop left]  node[below]  {$1 - r$} (c0); 
\draw[->] (cm) to[loop right] node[right] {$1$} ();        

\end{tikzpicture}
  \caption{A family of Markov chains with two bottom strongly connected components. In the left BSCC, accepting states are visited infinitely often with probability 1. In the right BSCC, they are only visited finitely often.}
  \label{fig:two-bsccs}
\end{figure}
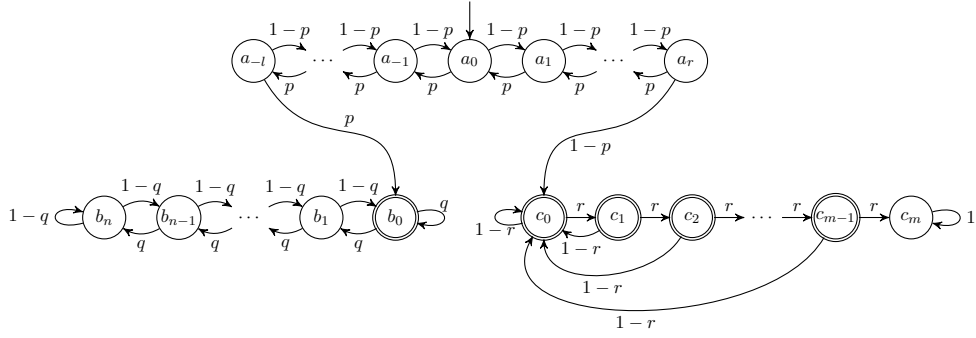
Consider the family of Markov chains depicted in \Cref{fig:two-bsccs}. We fix the parameters \(l=4,r=6,m=4\) as well as \(p=0.5,q=0.45,r=0.08\) and vary only the parameter \(n\). Every run will (with probability 1) eventually enter one of two SCCs. Runs entering the left BSCC will visit the accepting state \(b_0\) infinitely often and be accepted. Runs entering the right intermediate SCC will eventually reach the second BSCC consisting only of the non-accepting state \(c_m\) and be rejected. Thus the probability $p_{\textit{acc}}$ of accepting runs corresponds to the probability of reaching state \(b_0\) from the initial state \(a_0\). Using PRISM we determined \(p_{\textit{acc}} \approx 0.58\) for our choice of parameters\footnote{This obviously only depends on \(l,r\) and \(p\)}.

We now compare two methods of estimating this probability experimentally using testing. For both methods we first sample a sequence \(\mathcal{R}\) of 100 runs and a step quota \(k\). We compare how accurately both methods estimate the probability given the same step quota.
\begin{figure}[t]
\centering
 \includegraphics[width=\textwidth]{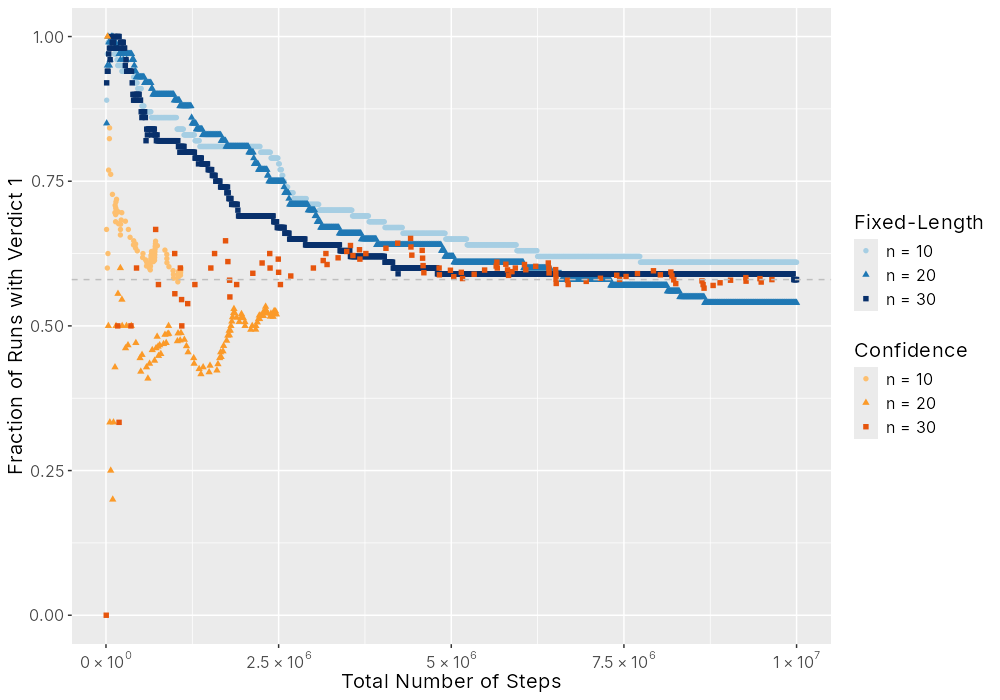}
\caption{Estimated probability of accepting runs in the Markov chain depicted in \Cref{fig:two-bsccs} using the parameters \(p=0.5,q=0.45,r=0.08,l=4,r=6,m=4,\) and \(n\in\{10,20,30\}\).}
\label{fig:comp}
\end{figure}

\begin{enumerate}
\item {\bfseries Fixed-Length Estimation:} For every run \(\run\) in \(\mathcal{R}\) we take the prefix \(\path\) of length \(\frac{k}{100}\) and determine \(\verdict(\path)\) as described in \Cref{sec:verdict}. The estimate is the fraction of runs for which this verdict is \(1\).
\item {\bfseries Confidence-based Estimation:} We repeatedly take the shortest prefix \(\path\) that has a confidence of at least \(\conflb(\path)\ge 100\) from the next run \(\run\) in \(\mathcal{R}\). We stop once the total number of steps exceeds our quota and determine the fraction of runs with verdict \(1\) from that subset. This potentially uses fewer runs, but the likelihood of the verdicts being correct is higher.
\end{enumerate}

First of all, observe that the fixed-length estimation has a fundamental problem: Independently of the \emph{accuracy} of the estimate of $p_{\textit{acc}}$, the method does not provide any statistical \emph{confidence} in it. On the contrary, the confidence-based estimation allows us to derive a confidence  using the standard likelihood ratio statistical test (see e.g. \cite{lehmann2005testing}).

Despite this, the comparison of the accuracies of both methods is interesting, as it shows that our method is particularly suitable for systems with small SCCs.  \Cref{fig:comp} plots the estimated probability for both methods and three different values for the parameter \(n\).
For small values of \(n\) the confidence-based approach has a clear advantage, converging to the correct value much faster. This is to be expected as runs entering the left BSCC can quickly fully explore it and reach a high confidence. This saves step quota, which can then be used in runs entering the right intermediate SCC. While the fixed-length approach is improbable to reach the state rejecting \(c_m\) in time given low quota, our confidence-based approach can use this surplus quota to correctly classify these runs as rejecting.

For large values of \(n\), however, our confidence-based approach becomes less efficient. For runs entering the left BSSC, a lot of steps are needed, until a high confidence is reached, which reduces the number of runs that can be inspected. This in turn also lowers the accuracy of the estimate. The fixed-length approach, on the other hand, converges approximately equally fast for all values of \(n\), which is to be expected, as runs entering the left BSCC are likely to be classified correctly, even if the BSCC is not fully explored.


\section{Conclusion}
We have presented a monitor for arbitrary LTL properties of systems modeled as Markov chains. Given a finite trace, the monitor returns a qualitative verdict on whether the trace will extend to a run satisfying a given property, and a quantitative confidence in the verdict. Our monitor refines the one introduced by  Bauer \etal in their seminal work on runtime verification of LTL \cite{BauerLS06,BauerLS07,BauerLS10,BauerLS11}.  We have shown that verdict and confidence can be canonically derived from the maximum likelihood and likelihood ratio principles.

There are some interesting directions for future work. In our approach the monitor has full information about states. We are planning to investigate the case in which information is only partial, as studied for runtime enforcement in \cite{EsparzaG23}. We also need the assumption that the Markov chain under scrutiny is finite. We would also like to study runtime verification for infinite chains of specific kinds, like probabilistic basic parallel processes, probabilistic programs with an unbounded counter,  or probabilistic pushdown systems, \cite{BonnetKL14,BrazdilKK14,BrazdilEKK13}.

  \section*{Acknowledgments}
  We thank two anonymous reviewers for their detailed comments.
  Vincent Fischer is funded by the DFG Research Training Group 2428 ``ConVeY''.
\bibliographystyle{splncs04}
\bibliography{ref}

\end{document}